\newtheorem{theorem}{Theorem}
\newtheorem{corollary}{Corollary} 
\newtheorem{lemma}{Lemma}
\newtheorem{remark}{Remark} 
\newtheorem{example}{Example}  
\begin{document}
\title{{Channel Aided Interference Alignment}}
\author{\Large Zainalabedin ~Samadi,  
       ~Vahid ~Tabatabavakili and ~Farzan ~Haddadi
\\\small Dept. of Elec. Eng.,   Iran University of Sceince and Technology 
Tehran,   Iran 
\\ \{z.samadi\}@elec.iust.ac.ir
\\ \{vakily,   haddadi\}@iust.ac.ir
}
\maketitle
\begin{abstract}

Interference alignment (IA) techniques mostly attain their degrees of freedom (DoF) benefits as the number of channel extensions tends to infinity. Intuitively, the more interfering signals that need to be aligned, the larger the number of dimensions needed to align them. This requirement poses a major challenge for IA in practical systems. This work evaluates the necessary and sufficient conditions on channel structure of a fully connected interference network with time-varying fading to make perfect IA feasible within limited number of channel extensions. We propose a method based on the obtained conditions on the channel structure to achieve perfect IA. For the case of $3$ user interference channel,   it is shown that only one condition on channel coefficients is required to make perfect IA feasible at all receivers. IA feasibility literature have mainly focused on network topology  so far. In contrast, derived channel aiding conditions in this work can be considered as the perfect IA feasibility conditions on channel structure.
\end{abstract}
\begin{keywords}
Interference Channels,   Interference Alignment,  Degrees of Freedom,  Generic Channel Coefficients,  Vector Space.
\end{keywords}
\IEEEdisplaynotcompsoctitleabstractindextext
\section{Introduction}
Wireless network receivers should  cope with interference from undesired transmitters in addition to the ambient noise,  and hence,  there is a rising interest in using advanced interference mitigation techniques to improve the network performance.  IA is One of the latest strategies to deal with interference. This technique was first introduced by Maddah Ali et. al. \cite{Maddah08}.  The idea of interference alignment (IA) is to coordinate multiple transmitters so that their mutual interference aligns at the receivers, facilitating simple interference cancellation techniques. The original method proposed in \cite{Maddah08} was iterative and nonlinear,  the linear and closed form approach was introduced by Jafar and Shamai \cite{Jafar}.

The majority of IA schemes fall into one of two categories  of signal space alignment and signal level alignment. Our main focus in this paper is on signal space alignment schemes. Using infinite dimensional extension of the channel,  it is shown that IA could achieve the optimal DoF of $K/2$ in K-pair single antenna ergodic interference channels  \cite{Cadam08}. The idea of IA has been successfully applied to various interference networks.

 Ergodic IA (EIA) scheme is proposed by Nazer et al., \cite{Nazer12}. This scheme aims to achieve $1/2$ interference-free ergodic capacity of interference channel (IFC) at any signal-to-noise ratio. The order of channel extensions needed by  \cite{Nazer12} is roughly the same as \cite{Cadam08}. The similar idea of opportunistically pairing two channel instances to cancel interference has been proposed  independently by \cite{Sang13} as well.  However,   EIA scheme  is based on an special pairing of the channel matrices and does not address the general structure of the paired channels  suitable for cancelling interference.

Assuming linear combining of paired channel output signals, this paper addresses  the general necessary and sufficient  structure of channel matrices which are suitable to be paired to cancel interference. Using this general pairing scheme, a new IA scheme is proposed which  significantly reduces the number of channel extensions needed to perfectly align interference. 

From a different standpoint, this paper obtains the necessary and sufficient feasibility conditions on channel structure to achieve total DoF of the IFC using limited number of channel extension. So far, IA feasibility literature have mainly focused on network configuration, see \cite{Ruan} and references therein. To ease some of IA criteria by using channel structure,  \cite{Leejan09} investigates degrees of freedom for the partially connected IFCs where some arbitrary interfering links are assumed disconnected. In this channel model,   \cite{Leejan09} examines how these disconnected links are considered on designing the beamforming vectors for IA and closed-form solutions are obtained for some specific configurations. In contrast,  our work evaluates the necessary and sufficient conditions on channel structure of an IFC to make perfect IA possible with limited number of channel extensions. An earlier version of this paper have been reported in \cite{Samadi}.

This paper is organized as follows. The system model is given in Section $2$.  In section $3$,  it is argued why LIA cannot achieve optimal degrees of freedom with limited number of channel extensions. We present the proposed scheme in section $4$. An example of the proposed method application is presented in section $5$.   In section $6$, the required number of channel extensions or equivalently the delay incurred by the proposed scheme is discussed and is shown to be significantly lower than previous proposed schemes. Section $6$ concludes our paper.

\section{System Model} \label{secsysmod}

\begin{figure}
\centering \includegraphics[scale=1.2]{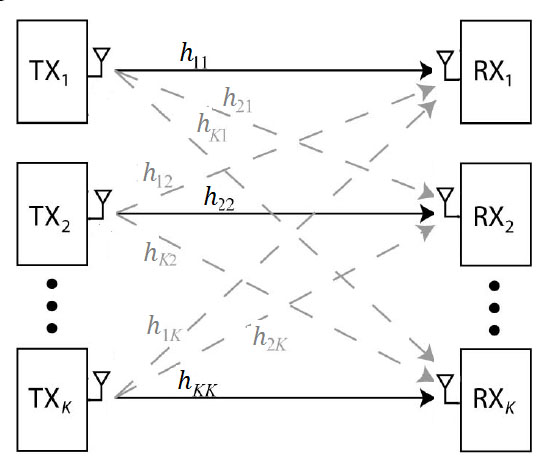}
\caption{K user IFC Model.}
\label{figure:KUser}
\end{figure}

Consider a $K$ user single-hop single antenna interference network. An illustration of system model is shown  in Fig. \ref{figure:KUser}. Each transmitter wishes to communicate with its respective receiver. Communication takes place in a shared bandwidth and the goal is to achieve maximum possible  sum rate along with a reliable communication. 

The channel between transmitter $j$ and receiver $k$, $j, k\in\{1,  \ldots,  K\}$ at time instant $t\in \mathbb{N}$ is denoted as  $h^{[kj]} (t)$. We assume that the values of channel coefficients at different time instants are independently drawn from some continuous distribution. The channel gains are bounded between a positive minimum value and a finite maximum value to avoid degenerate channel conditions.  The channel output observed by receiver  $k\in\{1,  \ldots,  K\}$  at time slot $t\in \mathbb{N}$ is a noisy linear combination of the inputs
\begin{eqnarray}
y^{[k]}(t)=h^{[k1]}(t)x^{[1]}(t)+h^{[k2]}(t)x^{[2]}(t) \cdots \nonumber \\+h^{[kK]}(t)x^{[K]}(t)+z^{[k]}(t), 
\end{eqnarray}
where $x^{[k]} (t)$  is the transmitted signal  of the $k^{\textrm{th}}$ transmitter,  and $z^{[k]}(t)$  is additive independent and identically distributed noise and drawn from a circularly
symmetric complex Gaussian distribution with unit variance, $z^{[k]}(t) \sim \mathcal{CN}(0, 1)$. It is assumed that all transmitters are subject to the power constraint:

\begin{eqnarray}
\mathrm{E}(||z^{[k]}(t)||^{2}) \leq P, \quad 1 \leq k \leq K.
\end{eqnarray}

We assume that there is perfect channel state information (CSI) at receivers and global CSI at transmitters. Hereafter,   time index is omitted for the sake of simplicity.

\section{Linear Vector IA Limitation}

Degrees-of-freedom region for a $K$ user IFC, with the system model discussed in section \ref{secsysmod}, has been derived in \cite{Cadam08} as follows, 
\begin{eqnarray}
\mathcal{D}= \left \{ {\bf d} \in \mathbb{R}_+^K: d_i+ d_j \leq 1, \; 1 \leq i, j\leq K \right \}, 
\label{dofreg}
\end{eqnarray}
and the number of DoF achieved by $K$ user IFC is obtained to be $K/2$. It is straightforward to see that the following corollary describes the only DoF vector, ${\bf d}$, that achieves total number of DoF. 

\begin{corollary}
\label{cor1}
The only DoF vector that achieves   total number of DoF of an IFC  is
\begin{eqnarray}
d_i=\frac{1}{2}, \forall 1\leq i \leq K.
\label{optdof}
\end{eqnarray}
\end{corollary}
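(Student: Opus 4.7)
The plan is to exploit the pairwise DoF constraints $d_i + d_j \le 1$ in two different ways depending on how one reads the index range. If the constraint is understood to include $i=j$, the corollary is immediate: setting $j=i$ gives $2d_i \le 1$, hence $d_i \le 1/2$ for every $i$, and then the total $\sum_i d_i = K/2$ can only be attained when every inequality is tight, forcing $d_i = 1/2$ for all $i$.

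For the more standard reading $i \ne j$, I would argue as follows. First show the sum bound: add the constraints $d_i + d_j \le 1$ over all unordered pairs $\{i,j\}$. Each $d_i$ appears exactly $K-1$ times on the left, giving $(K-1)\sum_i d_i \le \binom{K}{2} = K(K-1)/2$, so $\sum_i d_i \le K/2$. This recovers the total DoF value and, crucially, the summing procedure turns the total DoF target into an equality in an aggregated inequality.

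Second, I would invoke the equality case. If $\sum_i d_i = K/2$, then summing $\binom{K}{2}$ nonnegative slacks produces zero total slack, so every individual pair constraint must hold with equality: $d_i + d_j = 1$ for all $i \ne j$. Combining $d_1 + d_2 = 1$ with $d_1 + d_k = 1$ yields $d_2 = d_k$ for every $k \ge 2$; symmetrically, all components are equal, and $2d_i = 1$ gives $d_i = 1/2$. (The argument implicitly requires $K \ge 3$, which is the regime of interest for the paper since $K=2$ admits a continuum of maximizers.)

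The main obstacle is really only a bookkeeping one: making sure the equality case of the summed inequality is handled cleanly, and clarifying the quantification in the pairwise bound. Once those are pinned down, the corollary reduces to a short linear-algebraic deduction from the DoF region $\mathcal{D}$ and carries no genuine information-theoretic content beyond the region itself.
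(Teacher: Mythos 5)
Your argument is correct, and it is essentially the routine verification the paper leaves implicit: the paper offers no proof beyond calling the corollary ``straightforward to see'' from the region $\mathcal{D}$ and the total DoF value $K/2$, and your pairwise-summation/equality-case argument (with the tightness of each constraint forcing $d_i=1/2$) is exactly the intended deduction. Your added care about the index reading and the $K=2$ degenerate case goes slightly beyond what the paper records, but does not constitute a different approach.
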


Consider a $3$ user IFC. We will use the scheme based on \cite{Cadam08} to do IA. Assuming channel coefficients to be generic, it has been shown in \cite{Cadam08}  that optimal total of DoF for a $3$ user IFC cannot be achieved over limited number of channel usage. A brief review is presented here to maintain continuity of presentation. 

Let $\tau$ denote the duration of the time expansion in number of symbols. Here and after, we use the
upper case bold font to denote the time-expanded signals, e.g., ${\bf H}^{[jk]} = \textrm{diag} (h^{[jk]}(1) \ldots,h^{[jk]}(\tau))$, which is a size $\tau \times \tau$ diagonal matrix. Denote the beamforming matrix of transmitter $k$ as ${\bf V}^{[k]}$. 

We intend to achieve the outer bound of  $3/2$ DoF for this setup. Consider $2n$ extension of the channel. Over this extended channel,  consider a hypothetical achievable scheme where each of transmitter achieves $n$ DoF if possible,  using beamforming at every transmitter and zero-forcing at every receiver. Note that this is the only DoF point in achievable region that achieves total number of DoF of this network, according to corollary \ref{optdof}. The signal vector at the $k$'th receiver can be stated as
\begin{eqnarray}
{\bf Y}^{[k]}&{}={}&{\bf H}^{[k1]}{\bf V}^{[1]}  {\bf X}^{[1] }+{\bf H}^{[k2]}{\bf V}^{[2]}  {\bf X}^{[2] }\nonumber \\ &&{+}\: {\bf H}^{[k3]}{\bf V}^{[3]}  {\bf X}^{[3] },+{\bf Z}^{[k]},
\end{eqnarray}
where ${\bf H}^{[k1]}$ is the $2n \times 2n$ extension of the channel, ${\bf V}^{[k]}$ is $2n \times n$   beamforming matrix of user $k$, and ${\bf X}^{[k]}$ is a $n\times 1$ column vector comprised of transmitted symbols  $x_m^{[k]},   m=1,  \ldots,   n $. ${\bf Y}^{[k]}$ and ${\bf Z}^{[k]}$  represent the $2n$ symbol extension of $y^{[k]}$ and $z^{[k]}$,  respectively.

Receiver $i$ cancels the interference by  zero forcing all ${\bf V}^{[j]}, j\neq i$. The vectors corresponding to  interfering vectors must not occupy more than $n$ dimensions  from the $2n$ dimensional received signal vector ${\bf Y}^{[k]}$. Thus, IA requirements can be written as follows:

\begin{eqnarray}
\textrm{span} \left ( {\bf H}^{[ik]} {\bf V}^{[k]}  \right )=\textrm{span} \left ( {\bf H}^{[ij]} {\bf V}^{[j]} \right ), \quad i, j, k=1, 2, 3, \quad k, j \neq i, 
\label{SE1}        
\end{eqnarray}
where $\textrm{span}({\bf A})$ denotes the column space of matrix ${\bf A}$. 

Note that the channel matrices $ {\bf H}^{[ji]}$ are full rank almost surely. Using this fact,   (\ref{SE1}) implies that 
\begin{eqnarray}
\textrm{span} \left (  {\bf V}^{[1]}  \right )=\textrm{span} \left ( {\bf T} {\bf V}^{[1]} \right ), 
\label{CAE1}           
\end{eqnarray}
where $ {\bf T}$ is defined as 
\begin{eqnarray}
 {\bf T}= ( {\bf H}^{[13]}  ) ^{-1} {\bf H}^{[23]}   ( {\bf H}^{[21]})^{-1} {\bf H}^{[12]} ( {\bf H}^{[32]}  ) ^{-1}  {\bf H}^{[31]}.
\label{TM}
\end{eqnarray} 

 (\ref{CAE1}) implies that at least one eigenvector of  ${\bf T}$ is in  $\textrm{span} \left (  {\bf V}^{[1]}  \right )$. Since all channel matrices are diagonal,   the set of eigenvectors for all channel matrices,   their inverse and product are all identical to the set of column vectors of the identity matrix,  namely vectors of the from ${\bf e}_k=[0 \; 0 \; \cdots \; 1 \; \cdots \; 0]^T$. Since ${\bf e}_k$ exists in $\textrm{span} \left (  {\bf V}^{[1]}  \right )$,   (\ref{SE1}) implies that 
\begin{eqnarray}
&{}& {\bf e}_k \in \textrm{span} \left ( {\bf H}^{[ij]} {\bf V}^{[j]}  \right ),   \quad \forall i,   j \in \{1,  2,  3\}.         
\label{imply}
\end{eqnarray}
(\ref{imply}) implies that  at receiver $1$,   the desired signal $ {\bf H}^{[11]} {\bf V}^{[1]} $  is not linearly independent of the interference signal,   ${\bf H}^{[12]} {\bf V}^{[2]}$,   and therefore,   receiver $1$ cannot fully recover $ {\bf X}^{[1]}$ only by zero forcing the interference signal. Therefore, $3/2$ degrees of freedom for the $3$ user single antenna IFC cannot be achieved through LIA schemes, assuming channel coefficients to be completely random and generic.

\section{Channel Aided IA (CAIA)}
 The main result of this paper is  summarized in the following theorem:
\begin{theorem}
\label{maintheo}
The necessary and sufficient conditions for the perfect IA to be feasible in a $K$ user IFC is to have the following structure on the channel matrices:
\begin{eqnarray}
 {\bf T}_{j}^{[i]}={\bf P} \left [ \begin{array}{c c c} \tilde{{\bf T}}_{j}^{[i]} & 0 & 0 \\ 0 & \tilde{{\bf T}}_{j}^{[i]}& 0 \\ 0 & 0 & f(\tilde{{\bf T}}_{j}^{[i]}) \end{array} \right ] {\bf P}^T,  \\  \quad  i,  j \in \{2,   \ldots,   K\},   i \not = j, 
 \label{KAJI}
 \end{eqnarray}
 where ${\bf T}_j^{[i]}$ matrices are defined as 
\begin{eqnarray}
\begin{split}
{\bf T}_j^{[i]}=&\left ( {\bf H}^{[i1]} \right )^{-1} {\bf H}^{[ij]} \left ( {\bf H}^{[1j]} \right )^{-1}\\& {\bf H}^{[13]} \left ( {\bf H}^{[23]} \right )^{-1}{\bf H}^{[21]}, \\  \quad & i,  j =2, \ldots,  K,   \quad  j\not = i. \end{split}
\end{eqnarray}
 ${\bf P}$ is a $2n \times 2n$ permutation matrix,  and  $\tilde{{\bf T}}_j^{[i]}$ is an arbitrary $n_1 \times n_1$ diagonal matrix with nonzero diagonal elements, 	with an arbitrary $n_1$ in the range $1 \leq n_1 \leq n$,  and  $f({\bf X})$ is a  mapping whose domain is an $n_1 \times n_1$ diagonal  matrix and range is a $2n-2n_1 \times 2n-2n_1$  diagonal matrix ${\bf Y}=f({\bf X})$ whose set of diagonal elements  is a subset of diagonal elements of  ${\bf X}$. Diagonal elements of $ f(\tilde{{\bf T}}_{j}^{[i]})$ are distinct. 
\end{theorem}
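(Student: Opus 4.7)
The plan is to translate the IA requirement at every receiver into a simultaneous invariant-subspace condition on a single beamformer ${\bf V}^{[1]}$, and then characterize when such an invariant subspace is compatible with the desired signal remaining independent of the aligned interference. The starting point is the natural generalization of (\ref{SE1}) to $K$ users: at receiver $i$, all interfering streams ${\bf H}^{[ij]}{\bf V}^{[j]}$ with $j\neq i$ must span the same $n$-dimensional subspace. By the same chain of substitutions that produced (\ref{TM}), every such equality can be rewritten as $\textrm{span}({\bf V}^{[1]})=\textrm{span}({\bf T}_j^{[i]}{\bf V}^{[1]})$, so ${\bf V}^{[1]}$ must span an $n$-dimensional invariant subspace of \emph{every} ${\bf T}_j^{[i]}$, and the other beamformers are then determined (up to a nonsingular mixing) by propagating ${\bf V}^{[1]}$ through channel products.

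Next I would analyze the structure of common $n$-dimensional invariant subspaces of the family $\{{\bf T}_j^{[i]}\}$. Because each ${\bf T}_j^{[i]}$ is diagonal (a product of diagonal channel matrices and their inverses), its invariant subspaces decompose as direct sums of eigenspaces, and each eigenspace is spanned by those standard basis vectors ${\bf e}_k$ whose indices share a common diagonal entry. I would then reproduce, at the level of the general $K$-user problem, the argument in Section 3 showing that if the diagonal entries of ${\bf T}_j^{[i]}$ are all distinct, then the only invariant subspaces are axis-aligned, which forces ${\bf H}^{[ii]}{\bf V}^{[i]}$ to lie inside the aligned interference subspace at receiver $i$ and breaks the linear independence required for zero forcing. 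Hence some eigenvalues of ${\bf T}_j^{[i]}$ must coincide, and there must be enough coincidences to carve out an $n$-dimensional invariant subspace that is \emph{not} spanned by standard basis vectors at every receiver simultaneously.

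Sufficiency would then be established by exhibiting the beamformers: reorder rows/columns by ${\bf P}$ so that ${\bf T}_j^{[i]}$ becomes block diagonal with two identical $n_1\times n_1$ blocks $\tilde{\bf T}_j^{[i]}$ and a $(2n-2n_1)\times(2n-2n_1)$ block $f(\tilde{\bf T}_j^{[i]})$ whose diagonal entries are a distinct subset of those of $\tilde{\bf T}_j^{[i]}$. The repetition of $\tilde{\bf T}_j^{[i]}$ produces $n_1$ two-dimensional eigenspaces, each spanned by a pair $({\bf e}_k,{\bf e}_{n_1+k})$; I would pick ${\bf V}^{[1]}$ to consist of one generic vector from each such pair (yielding $n_1$ non-axis-aligned columns), and the distinctness condition on $f(\tilde{\bf T}_j^{[i]})$ forces the remaining $n-n_1$ columns to be specified standard basis vectors in the third block. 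A direct verification using $\textrm{span}({\bf H}^{[ij]}{\bf V}^{[j]})=\textrm{span}({\bf T}_j^{[i]}{\bf V}^{[1]})$ then shows that the interference aligns into exactly $n$ dimensions while ${\bf H}^{[ii]}{\bf V}^{[i]}$ remains in the complementary $n$ dimensions generically, completing feasibility.

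For necessity, the main obstacle will be converting ``every ${\bf T}_j^{[i]}$ shares an $n$-dimensional invariant subspace not spanned by standard basis vectors, compatible across all $(i,j)$'' into the specific repeated block form of the theorem. I would argue as follows: each non-axis-aligned basis vector of ${\bf V}^{[1]}$ lies in a two-or-higher-dimensional eigenspace of every ${\bf T}_j^{[i]}$, which pairs up (via a common permutation ${\bf P}$ determined by the support pattern of ${\bf V}^{[1]}$) a set of $2n_1$ diagonal indices into two identical copies of the same $n_1$ eigenvalues for every $(i,j)$; the remaining $2n-2n_1$ indices hold at most axis-aligned contributions of ${\bf V}^{[1]}$ and therefore may carry eigenvalues that are arbitrary, but they must already appear in the paired part (otherwise the corresponding ${\bf e}_k$ would be selected for ${\bf V}^{[1]}$ and yield the same desired/interference collapse as Section 3), and they must be mutually distinct (else further spurious repetitions would either enlarge the invariant subspace incompatibly or coincide with a paired eigenvalue, collapsing dimensions). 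Verifying that exactly this combinatorial structure, and no weaker one, survives the ``single common invariant subspace for all $(i,j)$'' requirement across all receivers is the delicate step, and I would handle it by induction on the number of non-axis-aligned columns of ${\bf V}^{[1]}$.
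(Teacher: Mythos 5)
Your overall route is the same as the paper's: reduce alignment at all receivers to the invariant-subspace condition $\textrm{span}({\bf V}^{[1]})=\textrm{span}({\bf T}_j^{[i]}{\bf V}^{[1]})$, use the fact that the ${\bf T}_j^{[i]}$ are diagonal so their invariant subspaces are unions of eigenspaces, rule out axis-aligned subspaces by the Section-3 decodability argument, conclude that eigenvalues must repeat (the paper's Lemmas \ref{lemma1} and \ref{lemma2}), and then prove sufficiency by an explicit eigenvector/beamformer construction. The necessity half of your plan is therefore acceptable in spirit (and no less rigorous than the paper's own treatment, which also leaves the step from ``no unique diagonal element'' to the canonical form (\ref{CAEm}) implicit).

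The genuine gap is in your sufficiency construction. You propose to fill the remaining $n-n_1$ columns of ${\bf V}^{[1]}$ with standard basis vectors supported on the third (size $2n-2n_1$) block. This directly violates the requirement you yourself invoke in the necessity argument: if some ${\bf e}_k\in\textrm{span}({\bf V}^{[1]})$, then because all channel matrices are diagonal, ${\bf e}_k$ also lies in the span of the aligned interference ${\bf H}^{[1j]}{\bf V}^{[j]}$ at receiver $1$ (and likewise at the other receivers), so the desired space ${\bf H}^{[11]}{\bf V}^{[1]}$ and the interference space intersect, their union has dimension at most $2n-1$, and zero forcing cannot recover all $n$ streams; this is exactly the collapse that condition (\ref{spnprp}) is there to forbid. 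Your supporting claim that ``the distinctness of $f(\tilde{\bf T}_j^{[i]})$ forces the remaining columns to be standard basis vectors'' is also incorrect: since each diagonal entry of $f(\tilde{\bf T}_j^{[i]})$ equals some entry of $\tilde{\bf T}_j^{[i]}$, the corresponding eigenspace is (at least) three-dimensional, spanned by ${\bf e}_k,{\bf e}_{n_1+k}$ and the matching third-block vector, so one can — and must — choose eigenvectors that \emph{mix} the third block with the paired blocks. This is precisely what the paper does in (\ref{6by3cacbf}): the last $2n-2n_1$ columns have the form $({\bf b};-{\bf b};f(\tilde{\bf V}){\bf e}_m)$ with ${\bf B}$ a $0$/$1$ matrix having a single nonzero per row and column (existence of such ${\bf B}$ is exactly what the subset-plus-distinctness condition on $f$ buys), and one checks that no ${\bf e}_k$ lies in the span or kernel of the resulting ${\bf R}$, so its last $n$ columns are an admissible ${\bf V}^{[1]}$. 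Without replacing your axis-aligned columns by such mixed eigenvectors, the sufficiency direction fails.
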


\begin{proof}

The proof for the case of $3$ user IFC is presented here, the general case of $K$ user IFC is similar. Based on  Theorem \ref{maintheo}, the necessary and sufficient condition for the perfect IA to be feasible in a $3$ user IFC is to have the following structure on the channel matrices:

\begin{eqnarray}
 &{}&{\bf T}={\bf P} \left [ \begin{array}{c c c} \tilde{{\bf T}} & 0 & 0 \\ 0 & \tilde{{\bf T}}& 0 \\ 0 & 0 & f(\tilde{{\bf T}}) \end{array} \right ] {\bf P}^T,
\label{CAEm}
\end{eqnarray}
where ${\bf T}$ is defined in (\ref{TM}). The proof of the necessary part is perented fist. 

\begin{lemma}
\label{lemma1}
If $2n \times n$ matrix $ {\bf V}^{[1]}$  is full column rank,   (\ref{CAE1}) implies that   $n$  eigenvectors of ${\bf T}$ lie in $\textrm{span} \left (  {\bf V}^{[1]}  \right )$.
\end{lemma}
\begin{proof}
 (\ref{CAE1}) implies that there exists an $n \times n$ dimensional matrix ${\bf A}$ such that 
\begin{eqnarray}
{\bf T} {\bf V}^{[1]}={\bf V}^{[1]}{\bf A}.
\end{eqnarray}
Let's define ${\bf w}$ as an eigenvector of ${\bf A}$, i.e., ${\bf A}{\bf w}=\mu{\bf w}$ where $\mu$ is its corresponding eigenvalue.  Since ${\bf V}^{[1]}$ is full column rank,  ${\bf V}^{[1]} {\bf w} \not = 0$ and we can write:
\begin{eqnarray}
{\bf T} {\bf V}^{[1]}{\bf w}={\bf V}^{[1]}{\bf A}{\bf w}=\mu {\bf V}^{[1]}{\bf w}.
\end{eqnarray}
Then ${\bf V}^{[1]}{\bf w}$ is an eigenvector of ${\bf T}$. On the other hand, ${\bf V}^{[1]}{\bf w}$ is in $\textrm{span} \left ( {\bf V}^{[1]}  \right )$.   Since  ${\bf A}$ has $n$ independent eigenvectors,   therefore,  $n$  of eigenvectors of ${\bf T}$ lie within $\textrm{span} \left ( {\bf V}^{[1]}  \right )$.
\end{proof}

On the other hand, based on the discussion we had on  (\ref{CAE1}),  there should not be any vector of the form ${\bf e}_i$ such that  ${\bf e}_i \in \textrm{span} \left ( {\bf V}^{[1]}  \right )$.  Since $\textrm{span} \left ( {\bf V}^{[1]}  \right )$ has dimension $n$,   it should have $n$ basis vectors of the form $ {\bf v}=\sum_{i=1}^{2n}\alpha_i {\bf e}_i,   \quad j=1, \ldots,  n$,   where at least $2$ of $\alpha_i$'s are nonzero. Let's call vectors with this form as non ${\bf e}_i$ vectors. Since $n$ of  ${\bf T}$'s eigenvectors lie in $\textrm{span} \left ( {\bf V}^{[1]}  \right )$,   the matrix ${\bf T}$ should have at least $n$ non ${\bf e}_i $ eigenvectors. Note that this requirement is necessary not sufficient. Assuming that ${\bf S}=[{\bf s}]$ is a matrix consisted of non ${\bf e}_i $ eigenvectors of ${\bf T}$ as its columns, it is concluded that $\textrm{span} \left ( {\bf V}^{[1]}  \right ) \in \textrm{span} \left ( {\bf S}  \right )$.

\begin{lemma}
\label{lemma2}
${\bf T}$ has no unique diagonal element.
\end{lemma}
\begin{proof}
It can easily be shown that if ${\bf s}_1= {\bf e}_i + {\bf e}_j,   \quad i,j=1,    \ldots,   n, i \neq j$  is an eigenvector of ${\bf T}$,   then $t_i = t_j$.   If $t_l$ is unique,  this implies that non ${\bf e}_i $ eigenvectors of ${\bf T}$ do not contain ${\bf e}_l$, and hence,  ${\bf e}_l \in \textrm{kernel} \left ( {\bf S}  \right )$, where $ \textrm{kernel} \left ( {\bf S}  \right )$  denotes  the null space of columns of matrix ${\bf S}$. Thus, $ {\bf e}_l \in \textrm{kernel} \left ({\bf V}^{[1]} \right )$ because  $\textrm{span} \left ( {\bf V}^{[1]}  \right ) \in \textrm{span} \left ( {\bf S}  \right )$. Since all channel matrices are diagonal,  using (\ref{SE1}),  ${\bf e}_j \in \textrm{kernel}({\bf V}^{[1]})$ implies that 
 \begin{eqnarray}
&{}& {\bf e}_j \in \textrm{kernel} \left ( {\bf H}^{[ij]} {\bf V}^{[j]}  \right ),    \quad \forall i, j \in \{1,   2,   3\}.
\end{eqnarray}

Thus,    at receiver $1$,    the total dimension of the desired signal $ {\bf H}^{[11]} {\bf V}^{[1]}$ plus interference from undesired transmitters, ${\bf H}^{[1j]} {\bf V}^{[j]},  j \neq 1$, is less than $2n$,  and desired signals are not linearly independent from the interference signals,    and hence,    receiver $1$ cannot fully recover  $ {\bf X}^{[1]}$ solely by zeroforcing the interference signal. Lemma \ref{lemma2} concludes the proof of necessary part of (\ref{CAEm}).  
\end{proof}
 
   The sufficient part is proved by noting the fact that  matrix ${\bf T} $ with the form given in (\ref{CAEm}) has  $L \geq n$ non ${\bf e}_i $ eigenvectors ${\bf r}_i, i=1, \ldots, L$ with the property that 
 \begin{eqnarray}
{\bf e}_k \not \in \textrm{span}({\bf R}), \quad k=1, \ldots, 2n,
\label{spnprp}
\end{eqnarray}
 and
 \begin{eqnarray}
  {\bf e}_k \not \in \textrm{kernell}({\bf R}), \quad k=1, \ldots, 2n,
  \label{krnlprp}
  \end{eqnarray}
where ${\bf R} $  is defined as a $2n \times L$ matrix consisted of ${\bf r}_i$'s as its columns.  In fact, one example case of matrix ${\bf R}$ can be obtained as follows: 
\begin{eqnarray} \begin{split}
 &{\bf R}={\bf P} \left [ \begin{array}{c c} \tilde{{\bf V}}&{\bf B}\\ \tilde{{\bf V}} &-{\bf B}\\ 0  & f(\tilde{{\bf V}}) \end{array} \right ], \end{split}
 \label{6by3cacbf}
 \end{eqnarray}
where $\tilde{{\bf V}}$ is an arbitrary $n_1 \times n_1$ diagonal matrix with $n_1$ defined in Theorem \ref{maintheo}. ${\bf P}$ and  f({\bf X})  are the same permutation matrix and mapping function used in (\ref{CAEm}), and ${\bf B}$ is defined as an $n_1 \times (2n-2n_1)$ matrix consisted of $0$ or $1$ elements. It is obtained as follows;
\begin{eqnarray}
 f(\tilde{{\bf T}}_{j}^{[i]}) ={\bf B}^T \tilde{{\bf T}}_{j}^{[i]} {\bf B}.
 \end{eqnarray} 
 The fact that diagonal elements of $ f(\tilde{{\bf T}}_{j}^{[i]})$ are assumed to be distinct implies that matrix  ${\bf B}$ has only one nenzero element in each row and column, therefore, ${\bf B} {\bf B}^T$ is a diagonal matrix, and  ${\bf B}^T {\bf B}={\bf I}_{2n-2n_1}$. Therefore, it is obtained that ${\bf B}f(\tilde{{\bf T}})=\tilde{{\bf T}}{\bf B}$, and hence,  matrix ${\bf R}$ satisfies the following condition, 
\begin{eqnarray}
{\bf T}{\bf R} ={\bf R}\left [ \begin{array}{c c} \tilde{{\bf T}} & 0  \\  0 & f(\tilde{{\bf T}}) \end{array} \right ],
 \end{eqnarray}
which implies that the columns of matrix ${\bf R}$ are eigenvectors of matrix ${\bf T}$. It can also be verified that matrix ${\bf R}$ satisfies (\ref{spnprp}) and  (\ref{krnlprp}).  Note that this choice for the set of non ${\bf e}_i$ eigenvectors of ${\bf T}$ is not unique. The last $n$ columns of matrix  ${\bf R}$ can be considered as  the columns of user $1$ transmit beamforming matrix ${\bf V}^{[1]}$. In this case,   ${\bf V}^{[1]}$ satifies  (\ref{spnprp}) and  (\ref{krnlprp}) as well.  ${\bf V}^{[2]}$   and ${\bf V}^{[3]}$ can be designed using (\ref{SE1}).
\end{proof}
\begin{example}
As an example, assume that, using $6$ extension of the channel,  $6 \times 6$  diagonal matrix  ${\bf T}$ has the following form, 
\begin{eqnarray}
{\bf T}=\textrm{diag}(1, 2, 1, 2, 1, 2)
\label{Texpl}
\end{eqnarray}
which has the form given in (\ref{CAEm}) with $n=3, n_1=2$, ${\bf P}={\bf I}_6$, where  ${\bf I}_6$ is $6 \times 6$ identity matrix,  $\tilde{{\bf T}}=\textrm{diag}(1, 2)$, and  $f(\tilde{{\bf T}})=\tilde{{\bf T}}$. Therefore,  matrix  ${\bf B}$ is obtained as ${\bf B}={\bf I}_2$. Matrix ${\bf R}$ for this example case can be obtained as 
\begin{eqnarray}
{\bf R}=\left [  \begin{array}{c c c c} 1  & 0 & 1  & 0 \\ 0  & 1& 0  & 1 \\ 
1  & 0 & -1  & 0 \\ 0  & 1 &0  &-1 \\ 0  & 0 & 1  & 0 \\ 0  & 0 & 0  & 1 \end{array} \right ].
\end{eqnarray}
The last $3$ colmuns of matrix ${\bf R}$ or every  span of these columns can be considered as the user $1$ transmit beamforming matrix,  ${\bf V}^{[1]}$. ${\bf V}^{[2]}$   and ${\bf V}^{[3]}$ can be obtained using (\ref{SE1}). 
 \end{example}
 
 \begin{remark}
It should be noted in (\ref{KAJI}) that all $(K-1)(K-2)-1$  channel aiding conditions share the same permutation matrix ${\bf P}$ and mapping function $f({\bf X})$. This is because the diagonal matrices  ${\bf T}_j^{[i]}$ should have the same set of non ${\bf e}_i$ eigenvectors that satisfies   (\ref{spnprp}) and  (\ref{krnlprp}), which are supposed to be columns of user $1$ beamforming matrix, ${\bf V}^{[1]}$. This requirement implies that complementary channel coefficients should occur simultaneously.
\end{remark}

\begin{remark}
If the condition (\ref{CAEm}) is true with the following form 
\begin{eqnarray}
 &{}&{\bf T}={\bf P} \left [ \begin{array}{c c} \tilde{{\bf T}} & 0 \\ 0 & \tilde{{\bf T}} \end{array} \right ] {\bf P}^T,
\label{CAEms}
\end{eqnarray}
where $\tilde{{\bf T}}$ is an arbitrary $n \times n$ diagonal matrix, ${\bf V}^{[1]}$   can be designed as 
\begin{eqnarray}
{\bf V}^{[1]}={\bf P}^T \left[ \begin{array}{c} {\bf I}_n \\  {\bf I}_n \end{array} \right ], 
\label{vdesig}
\end{eqnarray}
where ${\bf P}$ is the same permutation matrix used in  (\ref{CAEms}) and ${\bf I}_n$ is the $n \times n$ identity matrix. ${\bf V}^{[1]}$ can also be designed as any other $2n \times n$ matrix having the same column vector subspace with (\ref{vdesig}). 
\end{remark}

\begin{remark}
There are  $(K-1)(K-2)$ equations in (\ref{CAEms}). However,  channel aiding condition is already satisfied for the case of  $j=3$ and $ i=2$, because ${\bf T}_3^{[2]} ={\bf I}_{2n}$. Therefore,  there are $(K-1)(K-2)-1$  perfect IA feasibility conditions on channel structure for general $K$ user IFC.
\end{remark}
\begin{remark}
Let's denote every matrix ${\bf U}$ with the form given in (\ref{CAEm}) as ${\bf U}={\bf T}_P$. It can easily be seen that if  ${\bf U}={\bf T}_P$ and  ${\bf V}={\bf T}_P$, so is  ${\bf U}^{-1}={\bf T}_P$  and  ${\bf U} {\bf V}={\bf T}_P$. Channel aiding conditions have the following form, 
\begin{eqnarray}
 {\bf T}_j^{[i]}= {\bf T}_P,   \quad  i,  j=\{2,  \ldots,   K\},   i \not = j.
\end{eqnarray}
Assume that these conditions have the form given in (\ref{CAEms}). The special case of ${\bf H}^{[ij]} ={\bf T}_P,   \quad \forall i,   j,   \quad i \not = j$ is the channel coefficient pairing algorithm used in EIA in \cite{Nazer12}. In this case, channel aiding conditions, ${\bf T}_j^{[i]}={\bf T}_P$, are already satisfied, and hence, ergodic IA conditions are the special case  of the channel aiding conditions obtained in this paper.
\end{remark}

\section{An Example of Proposed Method Application}

 Consider a $3$ user IFC where transmitters $1,   2\; \textrm{and}\; 3$ are sending  symbols $x_1,   x_2,   \textrm{and} \; x_3$, respectively,  over time slot $t$. The following signal vector is received at the respective receivers:
\begin{eqnarray}
\left[ \begin{array}{c} y_1 \\ y_2 \\ y_3 \end{array} \right]= \left[ \begin{array}{c c c} h_{11} & h_{12} & h_{13} \\ h_{21} & h_{22} & h_{23}  \\ h_{31} & h_{32} & h_{33}  \end{array} \right] \left[ \begin{array}{c} x_1 \\ x_2 \\ x_3 \end{array} \right],
\end{eqnarray}
The system is assumed to be noise free for the moment. The transmitted symbols $x_1,   x_2$,   and $x_3$ are multiplied by factors,   say $v_1,   v_2$,   and $v_3$,  respectively,   and retransmitted over another time slot,   $t^{\prime}$ to assist receivers to cancel interference. The received signal at time slot $t^{\prime}$ would be
\begin{eqnarray}
\left [ \begin{array}{c} y_1^{\prime} \\ y_2^{\prime} \\ y_3^{\prime} \end{array} \right ]= \left [ \begin{array}{c c c} h_{11}^{\prime} & h_{12}^{\prime} & h_{13}^{\prime} \\ h_{21}^{\prime} & h_{22}^{\prime} & h_{23}^{\prime}  \\ h_{31}^{\prime} & h_{32}^{\prime} & h_{33}^{\prime}  \end{array} \right ]\left [ \begin{array}{c} v_1 x_1 \\ v_2 x_2 \\ v_3 x_3 \end{array}\right ].
\end{eqnarray}
 ${\bf y}$ and $ {\bf y}^{\prime}$ are combined linearly to cancel interference. This linear combination has the form of ${\bf U} {\bf y}+{\bf y}^{\prime}$, where  ${\bf U}$ is a $3 \times 3$ diagonal matrix defined as 
\begin{eqnarray}
{\bf U}=\textrm{diag}(u_1, u_2, u_3)
\end{eqnarray}
 Therefore, the following conditions should be satisfied at receiver $1$ for the interference to be cancelled by a linear combination of the received signals.
\begin{eqnarray}
u_1 h_{12}=-v_2  h_{12}^{\prime},   \\
u_1 h_{13}=-v_3  h_{13}^{\prime} 
\label{rec1}
\end{eqnarray}
 Similarly,   the conditions 
\begin{eqnarray}
u_2 h_{21}=- v_1  h_{21}^{\prime},  \\
u_2 h_{23}=- v_3  h_{23}^{\prime}  
\label{ESC}
\end{eqnarray}
 and 
\begin{eqnarray}
u_3 h_{31}=- v_1  h_{31}^{\prime},  \\
u_3 h_{32}=- v_2  h_{32}^{\prime} 
\label{rec3}
\end{eqnarray}
should be met at receivers $2$, and $3$, respectively. Aggregating all IA conditions in (\ref{rec1}),   (\ref{ESC}),   and (\ref{rec3}) into a unified system equation, we obtain
\begin{eqnarray}
{\bf F} {\bf c}={\bf 0},
\label{EC1}
\end{eqnarray}
where ${\bf c}$ is defined as ${\bf c}=\left[ v_1 \quad v_2 \quad v_3 \quad u_1 \quad u_2 \quad u_3 \right ]^T$ and ${\bf F}$ is the unified system matrix defined as
\begin{eqnarray}
{\bf F}= \left[\begin{array}{c c c c c c} h_{21}^{\prime} & 0& 0 & 0 & h_{21} & 0 \\  h_{31}^{\prime} & 0& 0 & 0 & 0& h_{31} \\ 0&  h_{12}^{\prime} & 0& h_{12} & 0 & 0 \\  0 & h_{32}^{\prime} & 0& 0 & 0 &  h_{32}\\  0& 0 & h_{13}^{\prime} & h_{13} & 0 & 0 \\  0& 0 & h_{23}^{\prime} & 0 &  h_{23} & 0 \end{array} \right].
\label{Fmat}
\end{eqnarray}
Note that ${\bf F}$ is a full rank matrix because the channel coefficients are assumed to be generic. Therefore,    (\ref{EC1}) has no nontrivial solution. However,  assume that one of the channel coefficients over time slot $t^{\prime}$,   say $h_{23}^{\prime}$,   has already the proper value to satisfy (\ref{ESC}),   i.e., 
\begin{eqnarray}
h_{23}^{\prime}=-\frac{u_2 h_{23}}{ v_3}.   
\label{BE1}
\end{eqnarray}
Thus,   (\ref{ESC}) is  already satisfied and can be omitted. Therefore,   (\ref{EC1}) is modified as 
\begin{eqnarray}
{\bf F}_n {\bf c}={\bf 0},
\label{EC2}
\end{eqnarray}
where ${\bf F}_n $ is defined as 
\begin{eqnarray}
{\bf F}_r= \left[\begin{array}{c c c c c c} h_{21}^{\prime} & 0& 0 & 0 & h_{21} & 0 \\  h_{31}^{\prime} & 0& 0 & 0 & 0& h_{31} \\ 0&  h_{12}^{\prime} & 0& h_{12} & 0 & 0 \\  0 & h_{32}^{\prime} & 0& 0 & 0 &  h_{32}\\  0& 0 & h_{13}^{\prime} & h_{13} & 0 & 0 \end{array} \right].
\label{Fnmat}
\end{eqnarray}
${\bf F}_r$ is the same with  ${\bf F}$ except that the last row is omitted. Every $ {\bf c} \in \textrm{kernel}({\bf F}_r)$ would satisfy the equation (\ref{EC2}). Since  $\textrm{rank} \left(\text{kernel}({\bf F}_r) \right)=1$, there are  infinite number of solutions for  ${\bf c}$. We can obtain the unique solution by normalizing one of the elements. 

Matrix ${\bf T}$ in (\ref{TM}) can be evaluated as 
\begin{eqnarray}
 &&{\bf T} = ( {\bf H}^{[13]}  ) ^{-1} {\bf H}^{[23]}   ( {\bf H}^{[21]})^{-1} {\bf H}^{[12]} ( {\bf H}^{[32]}  ) ^{-1}  {\bf H}^{[31]}\nonumber\\ &&{=}\: \left[\begin{array}{c c} \frac{h_{23} h_{12} h_{31}}{h_{13}  h_{21}h_{32}} & 0 \\ 0 &  \frac{ h_{23}^{\prime} h_{12}^{\prime} h_{31}^{\prime}}{ h_{13}^{\prime}h_{21}^{\prime} h_{32}^{\prime}}   \end{array} \right ].
\end{eqnarray}
According to (\ref{BE1}) and (\ref{EC2}),   we have 
\begin{eqnarray}
 \frac{h_{23} h_{12} h_{31}}{h_{13}  h_{21}h_{32}}=  \frac{ h_{23}^{\prime} h_{12}^{\prime} h_{31}^{\prime}}{ h_{13}^{\prime}h_{21}^{\prime} h_{32}^{\prime}}.
 \label{examcac}
\end{eqnarray}
Therefore,   ${\bf T}$ becomes 
\begin{eqnarray}
{\bf T} = \left[\begin{array}{c c} \frac{h_{23} h_{12} h_{31}}{h_{13}  h_{21}h_{32}} & 0 \\ 0 &  \frac{h_{23} h_{12} h_{31}}{h_{13}  h_{21}h_{32}}   \end{array} \right ] = {\bf T}_P. 
\end{eqnarray}

This channel aiding condition is the same with the  condition obtained in (\ref{CAEm}). Therefore, we obtained one degree of freedom for each of the users   by using two time slots of the channel.  The total of $3/2$ degrees of freedom is achieved for the entire channel. In comparison to EIA scheme,  \cite{Nazer12},   which requires all elements of the complementary channel matrix ${\bf H}^{\prime}$ to have some specified values; this example just enforces a single condition on channel coefficients,   (\ref{examcac}),   and hence significantly lowers the required delay.

\section{Delay Analysis}

Our scheme relies on matching up channel matrices so that the interference terms cancel out when received signal vectors are combined linearly. Clearly,   given any matrix ${\bf T}$,   the probability that channel aiding condition will occur exactly  is zero (for continuous-valued fading). Thus,   we can only look for channel aiding condition to be satisfied approximately. By taking finer approximations,   we can achieve the target rate in the limit.

Consider the example $3$ user IFC studied in previous section. Perfect IA occurs when the following condition is satisfied over time slots $t$ and $t^{\prime}$,    
\begin{eqnarray}
t^{\prime}&=&(h_{23}^{\prime} h_{12}^{\prime} h_{31}^{\prime})/(h_{13}^{\prime} h_{21}^{\prime} h_{32}^{\prime} )\nonumber \\ &&{=}\:(h_{23} h_{12 }h_{31})/(h_{13} h_{21} h_{32} )=t.
\label{caiac}
\end{eqnarray}
 Based on the assumption that channel coefficients are generic,   the probability of this event is zero and   some sort of approximation have to be used. Following theorem specifies residual interference in terms of the amount of approximation used. 

\begin{theorem}
\label{theores}
If $h_{ij}^{\prime}$  is considered as the complementary channel coefficient satisfied approximately,  residual interference will be 
 \begin{eqnarray}
 d=|h_{ij}^c |^2 |\Delta t/t|^2 |v_j |^2 p, 
 \end{eqnarray}  
where  $h_{ij}^c$ is defined as
\begin{eqnarray}
h_{ij}^c=t/t^{\prime} * h_{ij}^{\prime};
\end{eqnarray}
$\Delta t$ is defined as the amount of approximation used, $t^{\prime}=t+\Delta t$, and $p$ is the average transmitted power from user $j$.
\end{theorem}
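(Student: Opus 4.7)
The plan is to form the linear-combination output at the receiver whose cancellation condition is the one held only approximately, extract the single non-vanishing interference term, rewrite its coefficient in the form $v_j(h_{ij}^{\prime}-h_{ij}^c)$, and finally replace the mismatch $h_{ij}^{\prime}-h_{ij}^c$ by a clean expression in $\Delta t$ using the theorem's defining relation $h_{ij}^c=(t/t^{\prime})h_{ij}^{\prime}$.

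First I would write the combination used at receiver $i$, namely $u_i y_i+y_i^{\prime}=\sum_{k}(u_i h_{ik}+v_k h_{ik}^{\prime})x_k$. By hypothesis every pair $(i,k)$ with $k\neq j$ satisfies the exact IA condition $u_i h_{ik}+v_k h_{ik}^{\prime}=0$, so all interference terms cancel except the one indexed by $j$. The ``correct'' coefficient value that would also satisfy the missing equation is $h_{ij}^c=-u_i h_{ij}/v_j$, so substituting $u_i h_{ij}=-v_j h_{ij}^c$ leaves the surviving interference as $v_j(h_{ij}^{\prime}-h_{ij}^c)\,x_j$.

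Next I would simplify the mismatch factor. Using $h_{ij}^c=(t/t^{\prime})h_{ij}^{\prime}$ and $t^{\prime}=t+\Delta t$,
\[
h_{ij}^{\prime}-h_{ij}^c=h_{ij}^{\prime}\!\left(1-\tfrac{t}{t^{\prime}}\right)=h_{ij}^{\prime}\,\tfrac{\Delta t}{t^{\prime}}=h_{ij}^c\,\tfrac{\Delta t}{t},
\]
where the last equality follows from $h_{ij}^{\prime}=(t^{\prime}/t)h_{ij}^c$. The residual therefore collapses to $v_j h_{ij}^c(\Delta t/t)\,x_j$, and taking the expectation of its squared magnitude with $\mathrm{E}[|x_j|^2]=p$ yields the claimed $d=|h_{ij}^c|^2\,|\Delta t/t|^2\,|v_j|^2\,p$.

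The only delicate point is the mutual consistency of the two roles played by $h_{ij}^c$: it is introduced in the theorem as $(t/t^{\prime})h_{ij}^{\prime}$, yet in the residual computation it must also equal $-u_i h_{ij}/v_j$, the value that would have made the missing IA equation hold exactly. Verifying this amounts to solving the five exact IA equations (fixing, say, $v_1=1$) for the remaining $u_i,v_k$ and using the product structure of $t$ in (\ref{caiac}) to match ratios; this is the only algebra requiring care, and once it is in place the rest is an elementary second-moment calculation.
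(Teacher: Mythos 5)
Your proposal is correct and takes essentially the same route as the paper: the paper likewise isolates the single surviving term $(h_{23}u_2+h_{23}^{\prime}v_3)x_3$ in the linear combination $u_2y_2+y_2^{\prime}$, uses $h_{23}^{\prime}=h_{23}^{c}+(h_{23}^{c}/t)\Delta t$ (the same algebra as your mismatch step from $h_{ij}^c=(t/t^{\prime})h_{ij}^{\prime}$), and reads off the residual power $|h_{ij}^c|^2|\Delta t/t|^2|v_j|^2p$ from the second moment of $x_j$. The consistency point you flag (that $h_{ij}^c$ is both $(t/t^{\prime})h_{ij}^{\prime}$ and $-u_ih_{ij}/v_j$) is resolved implicitly in the paper through the example section, where the five exact conditions together with the ratio condition (\ref{examcac}) are shown to be equivalent to the omitted cancellation equation (\ref{BE1}); multiplying the five exact ratios indeed yields $-u_2/v_3$, so your sketched verification goes through.
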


\begin{proof}
Let's first evaluate how does $\Delta t$ affects the respective value of $\Delta h$,    assuming $t^{\prime}=t+\Delta t$. Consider the special case of $i=2\; \textrm{and} \; j=3$,  perfect channel coefficient value,    $h^c_{23}$,    would exactly satisfy channel aiding condition,    i.e.,  
\begin{eqnarray}
h_{ij}^c=t/t^{\prime} * h_{ij}^{\prime} \Rightarrow &(h_{23}^{c} h_{12}^{\prime} h_{31}^{\prime})/(h_{13}^{\prime} h_{21}^{\prime} h_{32}^{\prime} )\nonumber \\& =\:(h_{23} h_{12 }h_{31})/(h_{13} h_{21} h_{32} ).
\label{caiace}
\end{eqnarray}
This condition cannot be satisfied exactly and following equation is satisfied on the actual value of $h_{23}^{\prime} $,    
\begin{eqnarray}
&t^{\prime}=t+\Delta t \nonumber \\
&\Rightarrow (h_{23}^{\prime} h_{12}^{\prime} h_{31}^{\prime})/(h_{13}^{\prime} h_{21}^{\prime} h_{32}^{\prime} )\nonumber \\ &{}=\:(h_{23} h_{12 }h_{31})/(h_{13}. h_{21} h_{32} )+\Delta t.
\label{caiaca}
\end{eqnarray}
Combining  (\ref{caiace}) and  (\ref{caiaca}),    we obtain
\begin{eqnarray}
h_{23}^{\prime} =h_{23}^{c}+\frac{h_{23}^{c}}{t} \Delta t.
\label{herr}
\end{eqnarray}
Received signal for second user at time slot $t$ can be evaluated as 
\begin{eqnarray}
y_2=h_{21} x_1+h_{22} x_2+h_{23} x_3.
\end{eqnarray}
And similarly,    received signal at time slot $t^{\prime}$ can be written as
\begin{eqnarray}
y_2^{\prime}=h_{21}^{\prime} v_1 x_1+h_{22}^{\prime} v_2 x_2+h_{23}^{\prime} v_3 x_3.
\end{eqnarray}
We perform the following linear combination on received signals to cancel interference;
\begin{eqnarray}
u_2 y_2+y_2^{\prime}&=&(h_{21} u_2+h_{21}^{\prime} v_1 ) x_1\nonumber \\ &&{+}\:(h_{22} u_2+h_{22}^{\prime} v_2 ) x_2\nonumber \\ &&{+}\:(h_{23} u_2+h_{23}^{\prime} v_3 ) x_3.
\label{lcomb}
\end{eqnarray}
Substituting for $h_{23}^{\prime} $ from (\ref{herr}),   and assuming interference from user $1$ is cancelled out using LIA scheme,   residual interference at user $2$ is obtained as $|h_{23}^c |^2 |\Delta t/t|^2 |v_3 |^2 p$. 

 In general if $h_{ij}^c$  is considered as complementary channel coefficient instead of $h_{23}^c$,    residual interference will be 
 \begin{eqnarray}
 d=|h_{ij}^c |^2 |\Delta t/t|^2 |v_j |^2 p.
 \label{residu}
 \end{eqnarray}
 \end{proof}
 
 Therefore,    to minimize residual interference to transmitted power ratio,    channel aiding condition is considered on an interference with the least value of $h_{ij}^c$,    which is equivalent to the least value of $h_{ij}^{\prime}$  according to (\ref{herr}).  

Now we should evaluate how much of approximation is acceptable for $t^{\prime}$ such that the value of residual interference is not large enough to sacrifice channel degrees of freedom. 
We intend to evaluate how large $\Delta t$ is tolerable to prevent loss in DoF.  Without loss of generality,   it is assumed that  $h_{23}^{\prime}=\min \{h_{ij}^{\prime},   i \neq j\}$. Considering (\ref{herr}) and (\ref{lcomb}),     signal to noise plus interference ratio (SINR) at the receiver $2$ is obtained as 
\begin{eqnarray}
\textrm{SINR}_2=\frac{(h_{22} u_2+h_{22}^{\prime} v_2)^{2} p}{|h_{23}^c/t|^2 |\Delta t|^2 |v_3 |^2 p+N_2 }, 
\label{SINR}
\end{eqnarray}
where $N_2$ is defined as noise variance at receiver $2$ after linear combination,    i.e.,   $N_2=(1+|u_2|^2) \sigma_n^2$.  To achieve optimum DoF,    we should have
\begin{eqnarray}
\lim_{p\to\infty}(|h_{23}^c/t|^2 |\Delta t|^2 |v_3 |^2  p)\leq \infty.
\end{eqnarray}
Noting the plot shown in Fig. \ref{figure:approx} and assuming  $|t| \approx |t^{\prime}|$,    $|\Delta t|^2$ can be approximated as
\begin{eqnarray}
|\Delta t|^2 =4|t |^2 \sin⁡(\Delta\phi/2)^2\leq|t|^2 \Delta\phi^2, 
\label{maxi}
\end{eqnarray}
where $\Delta\phi$ is the maximum tolerable phase difference between $t^{\prime}$ and $t$,    as can be seen in Fig. \ref{figure:approx}.  
 \begin{figure}
\centering \includegraphics[scale=0.55]{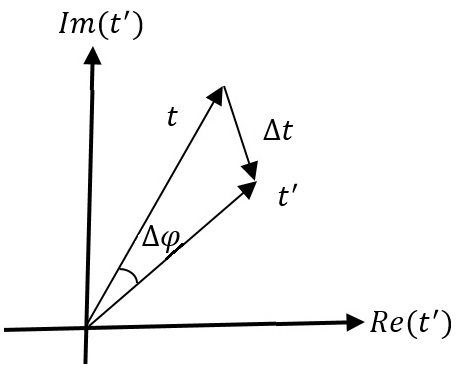}
\caption{Maximum tolerable approximation error between $t^{\prime}$ and $t$. }
\label{figure:approx}
\end{figure}
 Substituting (\ref{maxi}) in (\ref{SINR}),    we find that the SINR per code word symbol is lower bounded as follows
\begin{eqnarray}
\textrm{SINR}_2 \geq \frac{ h_{22} u_2+h_{22}^{\prime} v_2)^{2} p}{|h_{23}^c |^2 \Delta\phi^2 |v_3 |^2 p+N_2 }.
\end{eqnarray}
To maintain a capacity scaling of roughly $1/2\log P$ per user,    we require that,    for some constant $\psi$
\begin{eqnarray}
\Delta\phi^2=(p\psi)^{-1}.
\label{lnda}
\end{eqnarray}
Similar criteria can be obtained on  $\delta=||t^{\prime}|-|t||$ assuming $\Delta\phi \approx 0$.

 \subsection{Expected Delay for Phases}
 
To compute the delay needed for the channel coefficients to match,  we consider the special case of $t$ magnitudes to be fixed. 
If transmitters were to send a new symbol every time slot and the receivers were to simply treat interference as noise,    this expected delay is $1$.  For channel aided alignment,    code word symbols must travel through a channel matrix and the complementary channel matrix with its $t^{\prime}$ (\ref{caiaca}) to have a phase in the interval $[\phi(t)-\Delta\phi,   \phi(t )+\Delta\phi]$. 

 According to the model considered for the channel,    complex random variables $h_{ij}^{\prime}$ have a Gaussian distribution with zero mean and variance $\sigma^2$.  Therefore,    the phase of $t^{\prime}$ would have a uniform distribution in the interval $(-\pi,   +\pi]$,    and the probability that the phase of $t^{\prime}$ lies in the interval $[\phi(t)-\Delta\phi,   \phi(t )+\Delta\phi]$  equals to $1/\eta=2\Delta\phi/2\pi$,    where $\Delta\phi$ is defined as the phase of $\Delta t$,    i.e.,  $\Delta\phi=\phi(\Delta t)$.  Since the channel gains are independent,    the probability of the complementary matrix occurring in a given time slot is $(1/\eta)$.  Thus,    the number of time slots until the complementary matrix occurs is a geometric random variable with parameter $(1/\eta)$ and expected delay is
\begin{eqnarray}
d_{ave}^{CAIA}=\eta=\pi (p\psi)^{(1/2)}.
\end{eqnarray}
For time-varying magnitudes,    the expected delay scales in a similar fashion with an additional penalty for waiting for the magnitudes to match.  In comparison,    the expected delay with  ergodic alignment scheme is \cite{Nazer12}, 
\begin{eqnarray}
d_{ave}^{EIA}=(2p\psi)^{(9/2)}.
\end{eqnarray}
The significant gain in delay performance can be easily seen in the case of $3$ user IFC.  Fig.\ref{figure:pdelay} shows the  simulated results for phase delay performance with $phi(t_{ij})$ values generated using unform distribution in the interval $(-\pi,   +\pi]$. $\phi(\Delta t)$ is assumed to be $\phi(\Delta t) \leq \frac{\pi}{60}$. For ergodic IA scheme,    $\phi(\Delta h)$ is considered as $\phi(\Delta h)=1/\sqrt{2} \phi(\Delta t)$,   to equalize residual interference with both schemes. 
 \begin{figure}
\centering \includegraphics[scale=0.85]{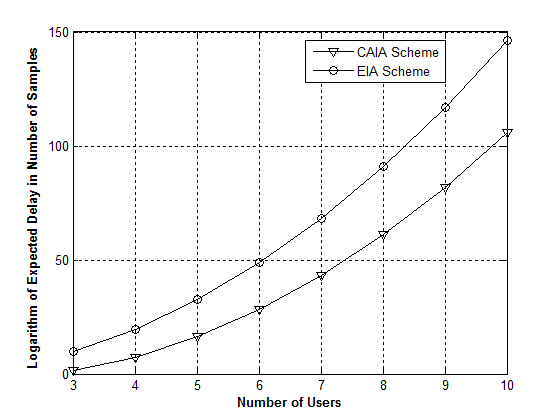}
\caption{Phase Delay Performance vs Number of Users. }
\label{figure:pdelay}
\end{figure}

 \subsection{Expected Delay for the Magnitudes}
 
To compute delay needed for the magnitudes to match,   assuming phase criteria is satisfied already,   i.e.,  $\Delta\phi = 0$  we should evaluate the probability distribution function of the magnitude of $t^{\prime}$ obtained in (\ref{caiac}),    which is not computationally straightforward.  Instead of computing exact distribution function of $r^{\prime}=|t^{\prime}|$,    we would use Gaussian approximation.  Taking the natural logarithm of $r^{\prime}$,    new random variable $z=\ln⁡(r^{\prime} )$ is obtained as
\begin{eqnarray}
z=&\ln⁡(r^{\prime} )=&\ln⁡(r^{\prime}_{23})+\ln⁡(r^{\prime}_{12})+\ln⁡(r^{\prime}_{31})\nonumber\\&&-\ln⁡(r^{\prime}_{13})-\ln⁡(r^{\prime}_{21})-\ln⁡(r^{\prime}_{32}), 
\end{eqnarray}
where $r^{\prime}_{ij},    i,   j=1,   2,   3$  is defined as $r^{\prime}_{ij}=|h_{ij}^{\prime}|$.  Random variable $ z_{ij}=\ln⁡(r^{\prime}_{ij})$ is  Log-Rayleigh distributed with mean value 
\begin{eqnarray}
m_{ij}=\ln⁡(\sigma)+\ln⁡(\sqrt{2})-1/2\lambda, 
\label{meanij}
\end{eqnarray}
where $\lambda$ is the Euler number defined as 
\begin{eqnarray}
\lambda=-\int_{0}^{+\infty}\ln⁡(x)  e^{-x} dx.
\end{eqnarray}
According to Euler-Mascheroni equality,    
\begin{eqnarray}
\lambda^2+\pi^2/6=\int_{0}^{+\infty}\ln⁡(x)^2   e^{-x} dx, 
\end{eqnarray}
therefore, variance of $z_{ij}$ is obtained as 
\begin{eqnarray}
v_{ij}=\frac{\pi^2}{24}.
\label{varij}
\end{eqnarray}
Using (\ref{meanij}) and (\ref{varij}), and based on the Gaussian approximation,  $z$ can be approximated as a Gaussian random variable with zero mean and variance $v_z=\frac{\pi^2}{4}$. Therefore,    $r^{\prime}$ would have the following distribution,   
\begin{eqnarray} 
p(r^{\prime})=\frac{1}{r^{\prime}\sqrt{(2\pi v_z )}} \exp⁡(-\frac{(\ln⁡(r^{\prime}))^2}{2v_z}).
\label{approxdis}
\end{eqnarray}
Simulation results shows the approximation is quite valid. Therefore,    $p_{CA}^r=|r^{\prime}-r|\leq dr$ is obtained as follows
\begin{eqnarray}\begin{split}
&p_{CA}^r (dr|r)=\\&\int_{r-dr}^{r+dr}\frac{1}{r^{\prime}\sqrt{(2\pi v_z )}} \exp⁡(-\frac{(\ln⁡(r^{\prime}))^2}{2v_z})dr^{\prime}=\\& Q(\frac{\ln⁡(r-dr)}{\sqrt{v_z}})-Q(\frac{\ln⁡(r+dr)}{\sqrt{v_z}}), 
\end{split}\end{eqnarray}
where $Q$ is defined as complementary error function. 
\begin{eqnarray}
Q(x)=\int_{x}^{\infty} \frac{1}{\sqrt{2\pi}} e^{-\frac{r^2}{2}}dr.
\end{eqnarray}

The expected delay for the magnitudes to match is obtained as $d(dr|r)=\frac{1}{p_{CA}^r (dr|r)}$.  In contrast,    the expected delay for ergodic alignment scheme is obtained as $d_{EIA}=\frac{1}{\prod(p_{EAI}^r(r_{ij}))}$, where $p_{EAI}^r(r_{ij})$ is obtained as
\begin{eqnarray} \begin{split}
p_{EAI}^r(r_{ij})=&p_E^r (|r_{ij}^{\prime}-r_{ij} |\leq \Delta h|r_{ij})=\\&\int_{r_ij-\Delta h}^{r_ij+\Delta h} \frac{r_{ij}^{\prime}}{\sigma^2}  \exp⁡(-\frac{(r_{ij}^{\prime})^2}{2\sigma^2 })dr_{ij}^{\prime}=\\ &\exp(-\frac{(r_{ij}-\Delta h)^2}{2\sigma^2 })-\exp(-\frac{(r_{ij}+\Delta h)^2}{2\sigma^2 }).
\end{split}\end{eqnarray}
Fig.\ref{figure:delay} shows the numerical results with $r_{ij}$ values generated using Rayleigh distribution with parameter $\sigma^2=1$.  For ergodic IA scheme,    $\Delta h$ is considered as $\Delta h=1/\sqrt{2} |\Delta t/t| min(|h_{ij}|)$,   to equalize residual interference with both schemes.  The main difference in delay performance between these two schemes is merely due to the number of constraints and not the nature of the constraints themselves. For the general case of $K$ user IFC,   number of constraints is $K^2$ for ergodic alignment scheme,    while in channel aided alignment scheme,    this number is $(K-1)(K-2)-1$.  Delay performance for magnitudes to match is not straightforward to evaluate for the case of more than $3$ users because $r_{ij}$ values are dependent on each other. However, it is fair to argue that this delay is upper bounded by that of the EIA scheme.

  \begin{figure}
\centering \includegraphics[scale=0.6]{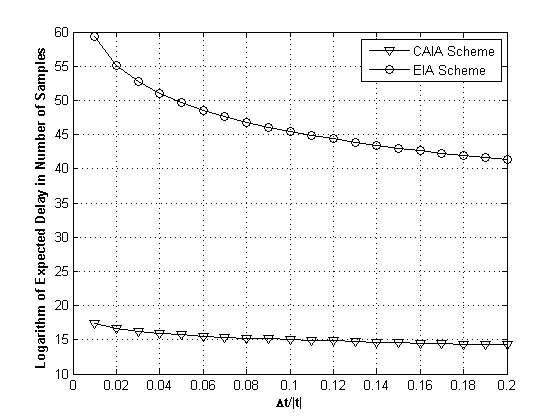}
\caption{Magnitude matching delay performance,   compared between EIA and CAIA schemes. }
\label{figure:delay}
\end{figure}

\section {Conclusion}

The concept of channel aiding conditions are introduced in this paper. These conditions are equivalent to perfect IA feasibility conditions on channel structure. Channel aided IA  scheme was proposed based on these conditions to achieve optimal degrees of freedom of the IFC using  limited number of channel extension. This scheme significantly lowers expected delay in comparison to  previously proposed methods. This approach makes the best use of the possible linear structure of the channel.

Assuming generic channel coefficients, stated conditions on channel structure are not exactly feasible. Approximation should be used and its effect on residual interference have to be analyzed. This is the subject of our future work on CAIA. Overall,   the proposed method is capable of reducing dimensionality and signal to noise ratio needed to exploit DoF benefits of IA schemes.

\bibliographystyle{IEEEtran}

\end{document}